\documentclass[submission,copyright,creativecommons]{eptcs}
\usepackage{breakurl}             
\usepackage{underscore} 
\usepackage{amssymb,latexsym,amsfonts,mathtools,array,enumerate,multicol,rotating,tikz,subfigure, color,url,graphicx,setspace,hhline,mathrsfs,amsthm,eucal}

\newcommand{\cnl}{\ensuremath{\mathbf{CNL_4^2}}}
\newcommand{\cnll}{\ensuremath{\mathbf{CNLL_4^2}}}

\newtheorem{theorem}{Theorem}[section]

\newtheorem{corollary}[theorem]{Corollary}

\theoremstyle{definition}
\newtheorem{definition}[theorem]{Definition}

\newtheorem{lemma}[theorem]{Lemma}

\newtheorem{proposition}[theorem]{Proposition}


\title{Cyclic Negations and Four-valuedness}
\author{Oleg Grigoriev\thanks{Oleg Grigoriev is supported by RFBR grant $\mathcal{N}$\textsuperscript{\underline{o}}\ 20-011-00698.}  \qquad\qquad Dmitry Zaitsev\thanks{Dmitry Zaitsev's research is conducted as a part of ``Brain, cognitive systems and artificial intelligence'' Lomonosov Moscow University scientific school project.}
	\institute{Lomonosov Moscow State University\\
		Faculty of Philosophy, Department of Logic,\\ Moscow, Russia}
	\email{grig@phios.msu.ru  \qquad\qquad zaitsev@philos.msu.ru}
	}

\begin{document}
	\maketitle
	
	\begin{abstract}
		We consider an example of four valued semantics partially inspired by quantum computations and negation-like operations occurred therein. In particular we consider a representation of so called square root of negation within this four valued semantics as an operation which acts like a cycling negation. We define two variants of logical matrices performing different orders over the set of truth values. Purely formal logical result of our study consists in axiomatizing the logics of defined matrices as the systems of binary consequence relation and  proving correctness and completeness theorems for these deductive systems.
	\end{abstract}

\section{Introduction}

The study of properties of negation-like connectives constitutes nowadays is a well established area of interdisciplinary research activity, including purely logical investigations (consult collective monographs~\cite{GabWan, WanFocus}). Negation often expresses the characteristic features of logical systems acting thereby as a mean for distinguishing and systematizing them (see, for instance,~\cite{odin2008} for the treatment of different types of paraconsistent logics in accordance with the properties of negations introduced there). In the literature one can find examples of hierarchic structures over the sets of negations, aimed to reflect their logical properties. Probably the best known is the ``kite of negations'' proposed by M.~Dunn in~\cite{starperp} and refined in the subsequent articles.  

In this paper, we are not intent on providing a complete picture of some big family of negation-like operations, instead we concentrate on a particular type of negation which may be characterized as a \textit{cyclic} operation over certain set of truth values. Specifically we are interested in its behaviour in the context of four-valued semantics, the breeding ground of many well known non-classical logics.

Occasionally our research was brought to life with an interest to the problematics of quantum computation and its possible representations within the semantic framework of non-classical logic. In particular the reflections on one of the most unusual quantum gates, the square root of negation, induced a unary operation on the four-element set of  truth values. On the syntactic level, we defined two logical systems considerably differing from each other with respect to the set of deductive postulates but sharing ``classicality'' of double negation. This particular feature is inherent in some other non-classical logics (\cite{humber95,kamide1,OmoriWan18,zagritwo,bifacial}).

	\section{Cyclic Negation in the Generalized Truth Values Setting}

Our interest to studies of cyclic negation stems from the different sources. This kind of negation is primarily known in the field of Post algebras and their logics (see~\cite{post,rasiowa}). Another origin can be found in the context of four valued semantics and corresponding logics. According to~\cite{Omori2015}, the first appearance of a cyclic negation in four valued framework can be found in~\cite{ruet}, while~\cite{Omori2015} itself deals with the property of functional completeness for the expansions of Belnap-Dunn logic. In particular Belnap-Dunn logic equipped with cyclic negation in~\cite{ruet} is proved to be functionally complete.  In~\cite{zagritwo}, two versions of cycling negation appeared under the names \textsl{left}  and \textsl{right turns} as the specific operations over the set of two-component generalized truth values\footnote{For the detailed account of this kind of compound truth values see~\cite{bifacial}.}, but they had not been studied there at any extent. Four-valued systems with a cyclic negation are investigated in~\cite{kamide1} and~\cite{OmoriWan18}. One of the features of the negation operations studied there consists in their ability to simulate the properties of classical (as well as intuitionistic) negation via composition. It is worth noting that~\cite{humber95} addresses the problem of simulating conventional negations via other unary operations touching upon a cyclic negation. 

\subsection{Four-Valuedness and Cyclic Negation from Quantum Computations}

Although this paper \textit{does not} concern with quantum computations or their logic at all, some concepts from the field of quantum computational logic have inspired the four-valued semantics underlying the logics discussed below and, specifically, the choice of the unary operation acting over there. This section clarifies the origins of the family of truth values used below.

One of the ideas that motivated this paper, namely, to merge generalized truth values approach and quantum computation in a joint logical framework, was prompted by seminal writings of prominent logicians of past and present, and after all is connected with the search of answers to the question, what (modern) logic is. 

The first one was proposed by G.~Frege and J.~\L{}ukasiewicz many years ago and now enjoys a new lease on life within the project of generalized truth values. The core idea may be expressed in \L{}ukasiewicz's words -- logic is the science of objects of a special kind, namely a science of logical values. Though seems strange, this understanding of logic is coherent with standard conception of logic, because the search for criteria of correct reasoning and argument immediately leads one to truth-(or, designated value-) preserving interpretation of logical inference. 

Another conception of logic is due to J. van Benthem, who in \cite{vanben2011} develops a  program of Logical Dynamics, which presupposes the interpretation of logic as a theory of information-driven agency, being thus the study of explicit informational processes (inference, observation, communication). The latter interpretation may be seen as the other side of the same coin~-- in words of J.~van Benthem, ``inference is just one way of producing information, at best on a par, even for logic itself, with others'' \cite[p. 183]{vanben2008}, so it is little wonder that ``inference and information update are intertwined'' \cite[p. 189]{vanben2008}.

One step away from here and just a moment to go, there is an idea to consider quantum logic as logic of quantum computation, where the latter offers a new possibility opened up by quantum gates to deal with information processing procedures being generalizations of reasoning and argument. An additional interest is connected with logical formalization of so called genuine quantum gates ``that transform classical registers into quregisters that are superpositions: the square root of the negation and the square root of the identity'' \cite[p. 298]{dalla2005}. According to \cite{deutsch} ``logicians are now entitled to propose a new logical operation $\sqrt{\mathtt{NOT}}$. Why? Because a faithful physical model for it exists in nature''.  

Let us remind some key concepts of quantum computational logic (for more details see, for example,~\cite{dalla2013}). The unit of representation of quantum information is a qubit (from English ``quantum bit''), $a|0\rangle +b|1\rangle$, where $|0\rangle$ and $|1\rangle$ are vectors $\begin{pmatrix}
1\\0 \end{pmatrix}$ and $\begin{pmatrix}0\\1\end{pmatrix}$, respectively, written in so called Dirac notation, while $a$ and $b$ are complex numbers, the \textit{amplitudes}, expressing the  probabilities.

 Quantum computational logic offers a broad family of operators, quantum logic gates\footnote{Well known examples are \textsc{cnot, toffoli, fredkin, swap} gates which perform reversible computation using some qubits as control registers for governing the actions on target bit. For example, \textsc{cnot} negates its target bit if and only if the control bit is recognized as 1.}, which in some cases can be rendered as the counterparts of classical logic gates and thus give rise to a family of propositional connectives in formal languages of quantum logical systems. But quantum computations provide also examples of non-classical gates. The square root of negation is of the special interest for us. For a qubit $|\varphi\rangle=a|0\rangle +b|1\rangle$, $\sqrt{\mathtt{NOT}}(|\varphi\rangle) = \frac{1}{2}[(1+i)a+(1-i)b]|0\rangle+\frac{1}{2}[(1-i)a+(1+i)b]|1\rangle$, where $i$ is an imaginary unit. 
 While $\mathtt{NOT}$ gate transforms $|1\rangle$ into $|0\rangle$ and vice versa, $\sqrt{\mathtt{NOT}}$ does only half of the work.
 
 The key observation here is that the square root of the negation is a kind of ``connective with memory''. In particular, when applied twice to Truth, it returns Falsity and vice versa. At the same time, the first application to True or False gives intermediate value. Thus, to understand where to go after the first application of the square root of the negation, one should somehow remember the point of departure. The complex nature of generalized truth values allows to yield this peculiarity by preserving the component of the initial value. For example, starting with $\bf T $, the first application of the square root of the negation ``adds'' uncertainty thus producing $ \bf{TU} $; the second application transforms it to $\bf F $; the third again adds $ \bf U $ to $\bf F $ resulting in $ \bf{FU} $; and finally after the fourth application we arrive at $\bf T $. So we can see that our representation of the square root of negation within four-valued framework is nothing more then \textsl{a cyclic negation}.
 
 Thus we have new set of truth values, $\{\mathbf{T}, \mathbf{TU}, \mathbf{FU},\mathbf{ F}\}$, and an open choice of order relation and subset of the designated values. Below we consider two natural variants of partial order over this set with the same two-element subset of designated values, $\{\mathbf T,\mathbf{TU}\}$. The choice of this subset seems reasonable for several reasons. It contains Truth itself (\textbf{T}) and the the other value (\textbf{TU}), having something that we would call \textsl{a trace of truth}. Moreover, this subset is one of the two prime filters in lattice $4\mathcal{Q}$ described below.
 
  In this paper, we consider two propositional logics, \cnl\ and \cnll,\ of four-valued matrices  (with two-valued matrix filters) constructed over the set of generalized truth values inspired by quantum computations as explained above. Though these logics have much in common, they differ essentially with respect to the properties of negations and their interrelation with conjunction and disjunction. 
 
 \subsection{Four-Valued Matrices}
 
 For both logics, \cnl\ and \cnll, we subsume the same propositional language $\mathcal{L}$ of signature\linebreak $\{\wedge, \vee, \neg\}$ over denumerable set of variables $PV$ with the set of complex formulas $FM$ constructed according to the standard inductive definition. 
 
 On the basis of the set $\mathscr U = \{\mathbf{T}, \mathbf{TU}, \mathbf{FU},\mathbf{ F}\}$ we define two distinct matrices, $\mathcal M^{\cnl}$ and $\mathcal M^{\cnll}$, over this set with the same subset of designated values $\mathcal D=\{\mathbf{T}, \mathbf{TU}\}$ and the same definition of unary operation $\mathcal O=\{{\sim},\wedge,\vee\}$ differing  with respect to meet and join in the lattice reducts of these matrices.
 
 Tableau definitions for the binary operations $ \wedge $ and $ \vee $
 can be easily imported from the order relations over the set of truth values represented via Hasse diagrams, depicted in Figure~\ref{lattices}.  Evidently these ordered sets of truth values constitute two simple lattices, $4\mathcal Q$ (left diagram) and  $4\mathcal{LQ}$.
 
 \begin{definition}
 	$\mathcal M^{\cnl}$ matrix is a structure $\langle\mathscr U, \{f_{c}\}_{c\in \mathcal O}, \mathcal D \rangle$, where the operations $f_{\wedge}$ and $f_{\vee}$ are defined as meet and join in $4\mathcal Q$, $f_{\sim}$ is defined via the following table:
 	\begin{center}
 		\begin{tabular}{|c||c|}
 			\hline
 			$ x $ & $ f_{\sim}(x) $  \\
 			\hline
 			\hline
 			$\mathbf{T}$ & $\mathbf{TU}$  \\
 			\hline
 			$\mathbf{TU}$ & $\mathbf{F}$ \\
 			\hline
 			$\mathbf{F}$ & $\mathbf{FU}$ \\
 			\hline
 			$\mathbf{FU}$ & $\mathbf{T}$ \\
 			\hline
 		\end{tabular}
 	\end{center}
 \end{definition}
 
 \begin{definition}
 	$\mathcal M^{\cnll}$ matrix is a structure $\langle\mathscr U, \{g_c\}_{c\in\mathcal O}, \mathcal D\rangle$,  where the operations $g_{\wedge}$ and $g_{\vee}$ are defined as meet and join in $4\mathcal{LQ}$, $g_{\sim}$ is defined via the same table as $f_{\sim}$.
 \end{definition}
 A valuation $v$ is a mapping $PV\mapsto\mathscr U$. An extension of $v$ to the set $FM$ depends on a matrix assumed. For example, in case of $\mathcal M^{\cnl}$ we have the following expressions for all $A, B\in FM$: $v(A\wedge B) = f_{\wedge}(v(A), v(B))$, $v(A\vee B) = f_{\vee}(v(A), v(B))$, $v(\neg A)= f_{\sim}(v(A))$. Thus we have to distinguish $\cnl$- and $\cnll$-valuations\footnote{We will avoid cumbersome subscripts like in $v_{\cnl}$ when possible.}.
 
 The semantic consequence relation is defined via preservation of a designated truth value and again relies on a matrix assumed:
 
 \begin{definition}
 	For all $A,B\in FM$, 
 	\begin{enumerate}
 		\item $A\vDash_{\cnl} B\Leftrightarrow v(A)\in D\Rightarrow v(B)\in D$, for each \cnl-valuation $v$,
 		\item $A\vDash_{\cnll} B\Leftrightarrow v(A)\in D\Rightarrow v(B)\in D$, for each \cnll-valuation $v$.
 	\end{enumerate}
 \end{definition}
 
 \begin{figure}
 	\begin{center} 
 		\begin{tikzpicture}
 		\node (B) at (0,0) {\footnotesize{\textbf{T}}};
 		\node  (F) at (-1,-1)  {\footnotesize{\textbf{TU}}};
 		\node (T) at (1,-1) {\footnotesize{\textbf{FU}}};
 		\node (N) at (0,-2) {\footnotesize{\textbf{F}}};
 		\draw (N) -- (F) -- (B);
 		\draw (N) -- (T) -- (B) ;
 		\end{tikzpicture}
 		\qquad
 		\begin{tikzpicture}
 		\node (T) at (0,-1) {\footnotesize{\textbf{T}}};
 		\node  (TU) at (0,-2)  {\footnotesize{\textbf{TU}}};
 		\node (F) at (0,-3) {\footnotesize{\textbf{F}}};
 		\node (FU) at (0,-4) {\footnotesize{\textbf{FU}}};
 		\draw (T) -- (TU) -- (F) -- (FU) ;
 		\end{tikzpicture}
 		\label{lattices}
 	\end{center}
 	\caption{Lattices $4\mathcal{Q}$ and $4\mathcal{LQ}$.} 
 \end{figure}

 It is instructive to examine set $\mathscr U$ from the generalized truth values perspective. A common way to construct a set of generalized truth values is to get powerset over some semantic basis. So, let us choose the basic set $\{\mathbf T, \mathbf U \}$, consisting of Truth and Uncertainty values, obtaining thereby the set of generalized truth values  $\{\{\mathbf{T, U} \}, \{\mathbf{T}\}, \{\mathbf{U}\},\varnothing \}$. It is natural to think of $\{\mathbf{T}\}$ as just $\mathbf T$, while $\{\mathbf{T, U}\}$ as our $\mathbf{TU}$. Then $\mathbf U$ is just ``uncertainty without being true''. Recall that the absence of truth can be understood as just being false. This suggests that $\mathbf U$ can be thought as $\mathbf{FU}$; likewise $\varnothing$ is just $\mathbf F$.

	 \section{Binary consequence systems for \cnl\ and \cnll}\label{axiomatic}
	
	To formalize semantically defined consequence relation we will use a specific variant of a logical calculus, ``a binary consequence system''\footnote{See~\cite[Chapter 6]{dunnhardegree} for a discussion of terminology concerning to different presentations of logical systems. In particular our approach is called ``binary implicational system'' there.}, which is typical of all \textbf{FDE}-related logics. The term ``binary'' means that a sequent\footnote{We use the term `sequent' in a broad sense, not reffering here to the apparatus of Gentzen calculi.} is an expression of a form $A\vdash B$ which contains exactly one formula in the antecedent or consequent position. We take some \textit{schemata} of sequents regarded as the axiomatic schemata. A sequent is an axiom if it is a particular instance of a schema. To make the presentation succinct we abbreviate ${\sim}{\sim}$ as ${\sim}^{\emph{\tiny 2}}$, ${\sim}{\sim}{\sim}$ as ${\sim}^{\emph{\tiny 3}}$ and so on. 
	
	\begin{definition}
		A sequent $A\vdash B$ is called \cnl-valid (\cnll-valid) $\Leftrightarrow$
		$$ A\vDash_{\cnl}B \quad (A\vDash_{\cnll}B). $$
		
	\end{definition}
	\begin{definition}
		A \cnl-proof (a \cnll-proof) as a list of sequents each of them is whether an axiom of \cnl (an axiom of \cnll) or derived from the previous items of the list using some rule of inference. A \cnl-proof (\cnll-proof) for a sequent $A\vdash B$ is a \cnl-proof (\cnll-proof) the last item of which coincides with $A\vdash B$. A sequent $A\vdash B$ is called \cnl-provable (\cnll-provable) if there is a \cnl-proof (\cnll-proof) for $A\vdash B$. 
	\end{definition} 	 
	
	To indicate that a sequent $A\vdash B$ is \cnl-provable (\cnll-provable) we also adopt the expression $A\vdash_{\cnl} B$ ($A\vdash_{\cnll}B$).
	
	\medskip
	{\cnl\ \& \cnll\ {\sc common axiomatic schemata and rules of inference}}:
	\begin{enumerate}[({a}1)]\itemsep=0pt
		\begin{multicols}{2}
			\item $A\wedge B\vdash A$,
			\item $A\wedge B\vdash B$,
			\item $B\vdash A\vee B$,
			\item $A\vdash A\vee B$,
			\item $A\vdash{\sim}^{\emph{\tiny 4}}A$,
			\item $ {\sim}^{\emph{\tiny 4}}A\vdash A $,
			\item ${\sim}A\wedge{\sim}B\vdash{\sim}(A\wedge B)$,
			\item ${\sim}(A\vee B) \vdash {\sim}A\vee{\sim}B$,
			\item $A\wedge{\sim}^{\emph{\tiny 2}}A\vdash B$,
			\item $A\wedge(B\vee C)\vdash (A\wedge B)\vee(A\wedge C)$.
		\end{multicols}
	\end{enumerate}
	
	\begin{enumerate}[({r}1)]\itemsep=0pt	
		\begin{multicols}{2}
			\item $A\vdash B$, $B\vdash C$\,/ $A\vdash C$,
			\item  $A\vdash B$, $A\vdash C$\,/ $A\vdash B\wedge C$,	
			\item  $A\vdash C$, $B\vdash C$\,/ $A\vee B\vdash C$,
			\item 	$A\vdash B$\,/ ${\sim}^{\emph{\tiny 2}}B\vdash{\sim}^{\emph{\tiny 2}}A$.
		\end{multicols} 
	\end{enumerate}
	
	\medskip
	{\cnl\ \textsc{additional axiomatic schemata}}:
	\begin{enumerate}[({b}1)]\itemsep=0pt
		\begin{multicols}{2}
			\item ${\sim}(A\wedge B)\vdash {\sim}A\wedge{\sim}B$,
			\item ${\sim}A\vee{\sim}B\vdash{\sim}(A\vee B)$.
		\end{multicols}
	\end{enumerate}
	
	\medskip
	{\cnll\ \textsc{additional axiomatic schemata}}:
	\begin{enumerate}[({c}1)]\itemsep=0pt
		\begin{multicols}{2}
			\item ${\sim}A\wedge{\sim}B\vdash{\sim}(A\vee B)$,
			\item ${\sim}(A\wedge B)\vdash{\sim}A\vee{\sim}B$,
			\item ${\sim}A\wedge{\sim^{\emph{\tiny 2}}}A\vdash {\sim}(A\wedge B)$
			\item $A\wedge{\sim}A\vdash{\sim}(A\vee B)$,
			\item ${\sim}(A\vee B)\vdash{\sim}A\vee B$,
			\item ${\sim}(A\vee B)\vdash{\sim}(B\vee A)$,
			\item ${\sim}(A\wedge B)\vdash{\sim}(B\wedge A)$,
			\item $({\sim}(A\vee B)\wedge{\sim}(A\wedge B))\vdash{\sim}A\wedge{\sim}B$,
		\end{multicols}
	\end{enumerate}
	
	\medskip
	
	\begin{proposition}\label{halfdm}The following sequents are provable in \cnl:
		\begin{enumerate}[(1)]\itemsep=0pt
			\item ${\sim}A\wedge{\sim}B\vdash{\sim}(A\vee B)$,
			\item ${\sim}(A\wedge B)\vdash{\sim}A\vee{\sim}B$.
		\end{enumerate}
	\end{proposition}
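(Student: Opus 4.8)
The plan is to reduce both claims to the two \cnl-specific schemata (b1) and (b2), using transitivity (r1) together with the elementary lattice fact that a meet entails the corresponding join. Concretely, I would first record the auxiliary derived sequent
\[
{\sim}A\wedge{\sim}B\vdash{\sim}A\vee{\sim}B,
\]
which holds because ${\sim}A\wedge{\sim}B\vdash{\sim}A$ by (a1) and ${\sim}A\vdash{\sim}A\vee{\sim}B$ by (a4), so (r1) chains these two together. This is just the instance $C\wedge D\vdash C\vee D$ with $C={\sim}A$ and $D={\sim}B$.

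For item (1), I would then note that (b2) supplies ${\sim}A\vee{\sim}B\vdash{\sim}(A\vee B)$ outright; composing the auxiliary sequent with (b2) via (r1) yields ${\sim}A\wedge{\sim}B\vdash{\sim}(A\vee B)$, as required.

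For item (2), the starting point is (b1), namely ${\sim}(A\wedge B)\vdash{\sim}A\wedge{\sim}B$. Composing this with the auxiliary sequent via (r1) gives ${\sim}(A\wedge B)\vdash{\sim}A\vee{\sim}B$, which is the claim.

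No real obstacle is expected: both derivations are short chains through (r1), and the only thing one has to notice is that the two ``mixed'' De Morgan directions are bridged by the meet-to-join step rather than by any property of negation. It is worth flagging, however, that these two sequents are precisely the additional schemata (c1) and (c2) postulated for \cnll; thus the proposition records that this half of the De Morgan behaviour, taken as primitive in the second system, is already derivable in \cnl.
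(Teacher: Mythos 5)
Your derivations are correct: the auxiliary sequent ${\sim}A\wedge{\sim}B\vdash{\sim}A\vee{\sim}B$ follows from (a1), (a4) and (r1), and composing it with (b2) gives item (1), while composing (b1) with it gives item (2). The paper states this proposition without proof, and your argument is exactly the evident intended one — including your correct observation that the two sequents are the schemata (c1) and (c2) taken as primitive in \cnll.
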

	\begin{proposition}\label{derivable2}
		The following sequents are provable in both \cnl\ and \cnll.
		\begin{tabbing}
			(\sc De1) ${\sim}^{\emph{\tiny 2}}A\wedge{\sim}^{\emph{\tiny 2}}B\dashv\vdash{\sim^{\emph{\tiny 2}}}(A\vee B)$,\\
			(\sc De2) ${\sim}^{\emph{\tiny 2}}A\vee{\sim}^{\emph{\tiny 2}}B\dashv\vdash{\sim^{\emph{\tiny 2}}}(A\wedge B)$,\\
			(\sc T) $B\vdash A\vee{\sim}^{\emph{\tiny 2}}A$.
		\end{tabbing}
	\end{proposition}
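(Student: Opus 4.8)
The plan is to treat the operation ${\sim}^{2}$ as an involutive, contraposable negation and to derive all three sequents from the \emph{common} schemata alone, so that one argument settles both \cnl\ and \cnll\ at once. The three facts I shall lean on are: (i) ${\sim}^{4}A\dashv\vdash A$, immediate from (a5) and (a6), which — applied to any subformula — lets me cancel a leading ${\sim}^{4}$; (ii) the contraposition rule (r4), $A\vdash B/{\sim}^{2}B\vdash{\sim}^{2}A$; and (iii) the explosion schema (a9), $A\wedge{\sim}^{2}A\vdash B$. Alongside these I shall freely use the routine lattice facts that $\wedge$ and $\vee$ are, up to $\dashv\vdash$, commutative and monotone, so that $X\dashv\vdash Y$ implies $Z\vee X\dashv\vdash Z\vee Y$; all of these follow from (a1)--(a4) together with (r1)--(r3).

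For (De1) I would prove the two halves separately. The entailment ${\sim}^{2}(A\vee B)\vdash{\sim}^{2}A\wedge{\sim}^{2}B$ comes from contraposing (a4) and (a3) to get ${\sim}^{2}(A\vee B)\vdash{\sim}^{2}A$ and ${\sim}^{2}(A\vee B)\vdash{\sim}^{2}B$, then combining by (r2). For the converse I contrapose (a1) to obtain ${\sim}^{4}A\vdash{\sim}^{2}({\sim}^{2}A\wedge{\sim}^{2}B)$; prefixing (a5) and cancelling ${\sim}^{4}$ yields $A\vdash{\sim}^{2}({\sim}^{2}A\wedge{\sim}^{2}B)$, and symmetrically for $B$, so (r3) gives $A\vee B\vdash{\sim}^{2}({\sim}^{2}A\wedge{\sim}^{2}B)$; one more use of (r4) with ${\sim}^{4}(\cdot)\dashv\vdash(\cdot)$ returns ${\sim}^{2}A\wedge{\sim}^{2}B\vdash{\sim}^{2}(A\vee B)$. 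The proof of (De2) is the mirror image: the easy half contraposes (a1) and (a2) and joins by (r3), while the hard half first establishes ${\sim}^{2}({\sim}^{2}A\vee{\sim}^{2}B)\vdash A\wedge B$ (contrapose (a4),(a3), strip ${\sim}^{4}$ by (a6), combine by (r2)) and then contraposes once more, again collapsing ${\sim}^{4}$, to reach ${\sim}^{2}(A\wedge B)\vdash{\sim}^{2}A\vee{\sim}^{2}B$.

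Finally, (T) is where the genuine idea is needed, since its right-hand side must be manufactured from an arbitrary antecedent $B$ and there is no ``verum'' axiom to hand. The observation is that (a9) is a ``falsum'' on the left, so contraposition ought to turn it into a ``verum'' on the right. Concretely, I take the instance $A\wedge{\sim}^{2}A\vdash{\sim}^{2}B$ of (a9) and apply (r4) to get ${\sim}^{4}B\vdash{\sim}^{2}(A\wedge{\sim}^{2}A)$. Now (De2), already in hand, gives ${\sim}^{2}(A\wedge{\sim}^{2}A)\dashv\vdash{\sim}^{2}A\vee{\sim}^{4}A\dashv\vdash A\vee{\sim}^{2}A$, the last step using ${\sim}^{4}A\dashv\vdash A$ and monotonicity of $\vee$; hence ${\sim}^{4}B\vdash A\vee{\sim}^{2}A$. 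Prefixing (a5), namely $B\vdash{\sim}^{4}B$, and cutting by (r1) delivers $B\vdash A\vee{\sim}^{2}A$. I expect this last manoeuvre to be the main obstacle: recognising that the outer ${\sim}^{2}$ produced by contraposing explosion can be stripped by feeding ${\sim}^{2}B$ (rather than $B$) into (a9) and then absorbing the resulting ${\sim}^{4}$ through (a5). By contrast, the De~Morgan steps for ${\sim}^{2}$ and the lattice congruence facts are entirely routine once ${\sim}^{4}A\dashv\vdash A$ and (r4) are available, and because nothing beyond the common schemata is invoked, the same derivations are valid in \cnl\ and in \cnll.
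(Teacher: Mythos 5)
Your proof is correct, and there is nothing in the paper to diverge from: the paper states Proposition~\ref{derivable2} without any proof, leaving the derivations as routine. I checked each step: the four ${\sim}^{2}$-De Morgan directions all go through exactly as you describe, since (r4) applied to instances of (a1)--(a4) produces the required contrapositives and the stray ${\sim}^{4}$ prefixes are absorbed by (a5)/(a6) via cut (r1); and for ({\sc T}) your instance $A\wedge{\sim}^{2}A\vdash{\sim}^{2}B$ of (a9), contraposed by (r4) to ${\sim}^{4}B\vdash{\sim}^{2}(A\wedge{\sim}^{2}A)$, then rewritten through (De2) as ${\sim}^{2}A\vee{\sim}^{4}A$ and collapsed to $A\vee{\sim}^{2}A$ by (a6) with (a3), (a4), (r3), finally composes with (a5) to give $B\vdash A\vee{\sim}^{2}A$. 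You are right that choosing the consequent of (a9) to be ${\sim}^{2}B$ rather than $B$ is the one non-obvious move, and your derivations use only the common schemata (a1)--(a6), (a9) and rules (r1)--(r4), which is precisely what justifies the proposition's claim of provability in both \cnl\ and \cnll\ simultaneously.
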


Systems \cnl\ and \cnll\ have much in common with classical logic. Indeed, if we were intended to represent classical logic as a binary consequence system, we would take (a1)--(a6), (a10) and (r1)--(r4), adding paradoxical postulates like (A9) (then, of course, a pair $ {\sim\sim} $ should be treated as classical $ \neg $). Is is well known that an alternative formulation of classical system is obtained by replacing contraposition rule with a full collection of De Morgan laws (but then both $ A\wedge{\neg}A\vdash_C B $ and $ A\vdash_C B\vee{\neg}B $  are needed, where $ \vdash_C $ stands for classical binary consequence relation) as axiomatic schemas. For further references we will denote this system as $C$. 

As mentioned above, double $\sim$ have all these features of classical negation. Thus a kind of \textit{intrinsic} classicality present in both our systems. More precisely we can represent this fact via translation function $\Phi$ from the language of classical logic $\mathcal{LC}$ (over the signature $\{\wedge,\vee,\neg\}$) to the language of the present systems (both sharing the same set of variables $PV$):
\begin{align*}
\Phi(p)&=p,\quad p\in PV,\\
\Phi(A\circ B)&=\Phi(A)\circ\Phi(B),\quad \circ\in\{\wedge,\vee\},\\
\Phi(\neg A)&={\sim}{\sim}\Phi(A).
\end{align*}

We would like to show, that $\Phi$ is not only a translation, but an embedding function as well. We prove this statement via semantic argument. Let us consider an expression $A\vDash_C B$ as an assertion about classical consequence relation with respect to four-valued boolean algebra based on the lattice $4\mathcal Q$.

\begin{lemma}
	For all formulas $A$, $B$ of the language $\mathcal{LC}$:
	\[
	A\vDash_C B\Leftrightarrow \Phi(A)\vDash_{\cnl}\Phi(B)\Leftrightarrow \Phi(A)\vDash_{\cnll}\Phi(B).
	\]
\end{lemma}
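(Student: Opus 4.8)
The plan is to argue entirely semantically, reducing all three conditions to statements about designatedness over the common value set $\mathscr U$ with the common prime filter $\mathcal D=\{\mathbf T,\mathbf{TU}\}$. Three structural facts drive everything. First, $4\mathcal Q$ is the four-element Boolean algebra, and its Boolean complement is realised by $f_{\sim}^{2}$: the table for $f_{\sim}$ gives $f_{\sim}^{2}(\mathbf T)=\mathbf F$, $f_{\sim}^{2}(\mathbf F)=\mathbf T$, $f_{\sim}^{2}(\mathbf{TU})=\mathbf{FU}$, $f_{\sim}^{2}(\mathbf{FU})=\mathbf{TU}$, which is exactly the complement in $4\mathcal Q$. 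Hence $\Phi$, sending classical $\neg$ to $\sim\sim$, translates the Boolean complement faithfully. Second, $\mathcal D$ is a prime filter both in $4\mathcal Q$ (as already noted) and in the chain $4\mathcal{LQ}$ (every proper up-set of a chain is a prime filter), so in either matrix a conjunction is designated iff both conjuncts are and a disjunction is designated iff some disjunct is. Third, $f_{\sim}^{2}$ — equivalently $g_{\sim}^{2}$, since $g_{\sim}=f_{\sim}$ — toggles designatedness: $f_{\sim}^{2}(x)\in\mathcal D$ iff $x\notin\mathcal D$, again by inspection.

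For the equivalence $A\vDash_C B\Leftrightarrow\Phi(A)\vDash_{\cnl}\Phi(B)$ I would first prove, by induction on the classical formula $A$, that $v(\Phi(A))$ computed in $\mathcal M^{\cnl}$ coincides with the value of $A$ in the four-element Boolean algebra $4\mathcal Q$ (with $\neg$ read as Boolean complement) under the same assignment of variables. The base case is immediate since $\Phi(p)=p$; the $\wedge$- and $\vee$-cases hold because $\Phi$ commutes with these connectives and $f_{\wedge},f_{\vee}$ are the Boolean meet and join; and the $\neg$-case holds because $\Phi(\neg A)=\sim\sim\Phi(A)$ is evaluated by $f_{\sim}^{2}$, the Boolean complement, by the first observation. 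Since $\vDash_C$ is by definition preservation of membership in $\mathcal D$ across all valuations into $4\mathcal Q$, and the $\cnl$-valuations range over exactly these assignments, the two sides say the same thing.

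The delicate equivalence is $\Phi(A)\vDash_{\cnl}\Phi(B)\Leftrightarrow\Phi(A)\vDash_{\cnll}\Phi(B)$, and here lies the main obstacle: because the meet and join of the chain $4\mathcal{LQ}$ differ from those of $4\mathcal Q$ (for instance $g_{\wedge}(\mathbf{TU},\mathbf{FU})=\mathbf{FU}\neq\mathbf F=f_{\wedge}(\mathbf{TU},\mathbf{FU})$), the actual values of $v(\Phi(A))$ in the two matrices generally disagree, so the value-identity argument above cannot be reused. I would instead prove the weaker invariant that suffices for consequence, namely that $v(\Phi(A))$ is designated in $\mathcal M^{\cnl}$ iff it is designated in $\mathcal M^{\cnll}$, by induction on $A$. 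The conjunction and disjunction steps go through verbatim using the second structural fact, since designatedness of a compound depends only on the designatedness of its immediate parts and does so identically in both matrices. The negation step is exactly where the doubling of $\sim$ in $\Phi$ is indispensable: for a lone $\sim$ the induction would break, because designatedness of $\sim C$ is not determined by that of $C$; but every negation produced by $\Phi$ appears as $\sim\sim$, and by the third fact $f_{\sim}^{2}$ and $g_{\sim}^{2}$ both flip designatedness, so the step survives. Combining the two invariants yields the full chain of equivalences, the only real work being the negation case of the second induction, where evenness of the negation blocks must be used rather than circumvented.
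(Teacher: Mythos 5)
Your proposal is correct, and it takes a genuinely different route from the paper's --- one that is, in a key respect, sounder. The paper proves the lemma by first claiming a full value identity $v_1(A)=v_2(\Phi(A))=v_3(\Phi(A))$ for all three extensions of a valuation (its equation (k1)), and then using the two ``rotation'' valuations $v^+$, $v^-$ to convert a designated-but-not-$\mathbf{T}$ countermodel into a $\mathbf{T}$/$\mathbf{F}$ one. Your observation that the values in the two matrices genuinely diverge is not just a reason to choose another method --- it is a counterexample to the paper's own intermediate claim: with $v(p)=\mathbf{TU}$, $v(q)=\mathbf{FU}$ one gets $v_1(p\wedge q)=f_{\wedge}(\mathbf{TU},\mathbf{FU})=\mathbf{F}$ but $v_3(p\wedge q)=g_{\wedge}(\mathbf{TU},\mathbf{FU})=\mathbf{FU}$, so (k1) fails for the \cnll-extension $v_3$, and the $v^-$ column of the paper's table $(\star)$ fails for $v_3$ as well (the map $v^-$ is not monotone on the chain $4\mathcal{LQ}$, hence not a lattice endomorphism of it: in the same example the table predicts $v^-_3(p\wedge q)=\mathbf{T}$ while actually $g_{\wedge}(\mathbf{F},\mathbf{T})=\mathbf{F}$). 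The divergence happens only within a designatedness class, which is why the lemma itself survives; your weaker invariant --- $v_2(\Phi(A))\in\mathcal D\Leftrightarrow v_3(\Phi(A))\in\mathcal D$, proved by induction using that $\mathcal D$ is a prime filter of both lattices and that $f_{\sim}^{2}=g_{\sim}^{2}$ toggles membership in $\mathcal D$ --- is exactly the repair that the paper's ``analogous'' argument for $\vDash_{\cnll}$ actually needs, and each step of your induction (including the point that the evenness of the $\sim$-blocks produced by $\Phi$ is indispensable in the negation case) checks out.

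One caveat: you assert that $\vDash_C$ is ``by definition'' preservation of $\mathcal D$, whereas the paper operationalizes $\vDash_C$ as preservation of $\{\mathbf{T}\}$ over the four-element Boolean algebra (this is what its rotations are really for: $v^+$ and $v^-$ are the two projections $4\mathcal Q\cong 2\times 2\to 2$, used to pass between $\mathcal D$-preservation and $\mathbf{T}$-preservation). The two readings coincide extensionally, since $\mathcal D$ is an ultrafilter of $4\mathcal Q$ and hence $\mathcal D$-preservation, $\{\mathbf{T}\}$-preservation and two-valued classical consequence are all the same relation --- but that coincidence is a small lemma, not a definition. Under the paper's reading your first equivalence still owes these lines: given a $\mathcal D$-violation, compose the valuation with the projection $h$ satisfying $h^{-1}(\mathbf{T})=\mathcal D$ (the clockwise rotation) to get a $\{\mathbf{T}\}$-violation; given a $\{\mathbf{T}\}$-violation with $v_1(B)=\mathbf{TU}$, compose with the other projection (the counter-clockwise one) to get a $\mathcal D$-violation. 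Both projections are Boolean endomorphisms of $4\mathcal Q$, so they commute with your value identity $v_1=v_2\circ\Phi$. With that bridge made explicit, your argument is complete.
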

\begin{proof}
	Recall that a valuation is a mapping $PV\mapsto\mathscr U$, but for the extended valuations we should distinguish mappings depending on the languages and underlying semantic structures. Let us suppose for the purposes of this lemma that given a valuation $v$, we denote by $v_1$, $v_2$ and $v_3$ its classical, \cnl- and \cnll-extensions respectively.
	
	Induction on the structure of a formula shows that for all valuation $v$ and all its extensions  $v_1$, $v_2$, $v_3$ and any formula $A$ of the language $\mathcal{LC}$, 
	\begin{equation}\label{k1}
	v_1(A)=v_2(\Phi(A))=v_3(\Phi(A)).
	\end{equation}

	Next we define partial clockwise ($v^+$) and counter-clockwise ($ v^- $) rotation of a valuation $v$ for an arbitrary $p\in PV$:
	\begin{center}
		\begin{tabular}{|c||c|c|}
			\hline
			$v(p)$ & $v^+(p)$ & $ v^-(p) $\\
			\hline\hline
			$\mathbf{T}$ & $\mathbf{T}$ & $\mathbf{T}$\\
			\hline
			$\mathbf{TU}$ & $\mathbf{T}$ & $\mathbf{F}$\\
			\hline
			$\mathbf{F}$ & $\mathbf{F}$ & $\mathbf{F}$\\
			\hline
			$\mathbf{FU}$ & $\mathbf{F}$ & $\mathbf{T}$\\
			\hline
		\end{tabular}
	\end{center}
	
	Routine check  proves the following key fact, namely that for a valuation $v$, partial clockwise rotation $v^+$ induces corresponding rotations $v_i^+$, $i\in\{1,2,3\}$, while $v^-$ induces $v_i^-$, $i\in\{1,2,3\}$, answering the description given in the table below for each formula $ A\in \mathcal{LC}$:
	\begin{center}
		\begin{tabular}{|c||c|c|}
			\hline
		 	$v_1(A)$, $ v_2(\Phi(A)) $, $ v_3(\Phi(A)) $ & $v^+_1(A)$, $ v^+_2(\Phi(A)) $, $ v^+_3(\Phi(A)) $ & $v^-_1(A)$, $ v^-_2(\Phi(A)) $, $ v^-_3(\Phi(A)) $ \\
			\hline\hline
			$\mathbf{T}$ & $\mathbf{T}$ & $\mathbf{T}$\\
			\hline
			$\mathbf{TU}$ & $\mathbf{T}$ & $\mathbf{F}$ \\
			\hline
			$\mathbf{F}$ & $\mathbf{F}$ & $\mathbf{F}$ \\
			\hline
			$\mathbf{FU}$ & $\mathbf{F}$ & $\mathbf{T}$ \\
			\hline
		\end{tabular}\quad($\star$)
	\end{center}
		
	Now suppose $A\vDash_C B$, but $\Phi(A)\nvDash_{\cnl}\Phi(B)$. This means that there is some valuation $v_2$ such that $v_2(\Phi(A))\in D$, $v_2(\Phi(B))\notin D$. Assume $v_2(\Phi(A))=\mathbf{T}$, $v_2(\Phi(B))\neq\mathbf{T}$. Then, by~\eqref{k1}, $v_1(A)=\mathbf{T}$, $v_1(B)\neq\mathbf{T}$. Hence $A\nvDash_{C}B$. 
	
	Next assume $v_2(\Phi(A))=\mathbf{TU}$, but $v_2(\Phi(B))=\mathbf{FU}$ or $v_2(\Phi(B))=\mathbf{F}$.
	Then, according to table ($ \star $), $v_2^+(\Phi(A))=v^+_1(A)=\mathbf{T}$, but $v_2^+(\Phi(B))=v^+_1(B)\neq\mathbf{T}$, so $A\nvDash_{C}B$. 
	
	For the other direction assume $A\nvDash_{C}B$, that is $v_1(A)=\mathbf{T}$, $v_1(B)\neq\mathbf{T}$. According to~\eqref{k1}, this means that $v_2(\Phi(A))=\mathbf{T}$, $v_2(\Phi(B))\neq\mathbf{T}$. If $v_2(\Phi(B))\neq\mathbf{TU}$, then $\Phi(A)\nvDash_{\cnl}\Phi(B)$. Consider, however, the case $v_2(\Phi(B))=\mathbf{TU}$. According to ($\star$) we can take the rotation $v_2^-$ such that $v_2^-(\Phi(A))=\mathbf{T}$ and $v_2^-(\Phi(B))=\mathbf{F}$. Hence, again, $\Phi(A)\nvDash_{\cnl}\Phi(B)$.
	
	An analogues provides the proof in case of $\vDash_{\cnll}$ relation. 
\end{proof}
\begin{corollary}
	$\Phi$ is an embedding function.
\end{corollary}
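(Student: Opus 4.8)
The plan is to obtain the corollary directly from the Lemma just proved, by passing from the semantic consequence relations $\vDash$ to the syntactic ones $\vdash$ via adequacy. Recall that calling $\Phi$ an \emph{embedding} asks for more than that it be a translation: it must be \emph{faithful}, in the sense that it both preserves and reflects derivability. Thus the target claim is that, for all formulas $A,B$ of $\mathcal{LC}$,
\[
A\vdash_C B\;\Leftrightarrow\;\Phi(A)\vdash_{\cnl}\Phi(B)\;\Leftrightarrow\;\Phi(A)\vdash_{\cnll}\Phi(B).
\]

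First I would invoke soundness and completeness for each of the three calculi. For the classical system $C$ this is the standard adequacy with respect to the four-valued boolean algebra built on the lattice $4\mathcal{Q}$, yielding $A\vdash_C B\Leftrightarrow A\vDash_C B$. For the two target systems the correctness and completeness theorems established elsewhere in the paper give $\Phi(A)\vdash_{\cnl}\Phi(B)\Leftrightarrow\Phi(A)\vDash_{\cnl}\Phi(B)$ and the analogous equivalence for $\cnll$.

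Then I would simply chain these equivalences through the Lemma. Since the Lemma yields
\[
A\vDash_C B\;\Leftrightarrow\;\Phi(A)\vDash_{\cnl}\Phi(B)\;\Leftrightarrow\;\Phi(A)\vDash_{\cnll}\Phi(B),
\]
composing on both ends with the three adequacy statements converts each $\vDash$ into the matching $\vdash$, producing exactly the faithfulness displayed above. Hence $\Phi$ is an embedding function.

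The genuine content of the argument has already been discharged in the Lemma, through the clockwise and counter-clockwise rotation device that turns a would-be counterexample stuck at the intermediate value $\mathbf{TU}$ into an honest classical counterexample; so I expect no real obstacle in the corollary itself. The only point requiring care is conceptual rather than computational: the embedding claim concerns the \emph{syntactic} relations $\vdash$, whereas the Lemma speaks only of the \emph{semantic} relations $\vDash$, so the corollary is not immediate from the Lemma in isolation but needs the completeness theorems for $\cnl$ and $\cnll$ together with classical adequacy to bridge the two. Once those are in hand, the corollary is a one-line consequence.
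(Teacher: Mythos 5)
Your proof is correct, but it routes through more machinery than the paper itself uses, because you and the paper read ``embedding'' at different levels. The paper defines the logics \cnl\ and \cnll\ \emph{as} the logics of their four-valued matrices, so their consequence relations just are the semantic relations $\vDash_{\cnl}$ and $\vDash_{\cnll}$; likewise $\vDash_C$ is taken as the classical relation over the Boolean algebra on $4\mathcal{Q}$. On that reading the corollary is an immediate restatement of the lemma --- which is why the paper states it with no proof, announces beforehand that the embedding claim will be proved ``via semantic argument,'' and places it in Section 3, \emph{before} the soundness and completeness results of Section 4. You instead take the embedding claim to concern the syntactic relations $\vdash_C$, $\vdash_{\cnl}$, $\vdash_{\cnll}$, and bridge to the lemma via three adequacy theorems. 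That chain is sound, and it yields a genuinely stronger formulation; but note what it costs relative to the paper's intent: it forward-references results not yet available at the point the corollary is stated, it relies on completeness for \cnll, which the paper only asserts to be analogous to the \cnl\ case (no separate proof is written out in the text), and it invokes adequacy of the classical binary-consequence system $C$ with respect to the four-element Boolean matrix with top-preservation, which the paper treats as folklore rather than proving. So: same essential content (all the real work is in the rotation argument of the lemma, as you rightly observe), but your insistence that the corollary ``is not immediate from the Lemma in isolation'' is an artifact of your syntactic reading, not a gap in the paper's argument.
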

\section{Soundness and Completeness of \cnl}
\subsection{Soundness}

\begin{lemma}[{\sc Local Soundness for \cnl}]\label{locsoundcnl}
	All axiomatic schemata of \cnl\ represent \cnl-valid sequents and the rules of inference preserve \cnl-validity.
\end{lemma}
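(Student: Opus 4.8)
The plan is to establish the two halves of the lemma separately: that every axiom schema of \cnl\ denotes a \cnl-valid sequent, and that each rule (r1)--(r4) preserves \cnl-validity. Since the value $v(A)$ of any formula depends only on the values its schematic letters receive, verifying a schema $A\vdash B$ reduces to a finite computation over $\mathscr U$: for every assignment of elements of $\mathscr U$ to the letters in $A$ and $B$, confirm that $v(A)\in\mathcal D$ forces $v(B)\in\mathcal D$. I would first record three small facts that drive everything. Namely, $\mathcal D=\{\mathbf T,\mathbf{TU}\}$ is a prime filter of $4\mathcal Q$; the single negation satisfies $f_\sim x\in\mathcal D$ exactly when $x\in\{\mathbf T,\mathbf{FU}\}$; and $f_\sim^2$ acts as the Boolean complement in $4\mathcal Q$, so that $f_\sim^2 x\in\mathcal D$ iff $x\notin\mathcal D$, with $x\wedge f_\sim^2 x=\mathbf F$ for every $x$, while $f_\sim^4$ is the identity.

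With these facts in hand, most postulates fall out uniformly. The lattice schemata (a1)--(a4) and the distributive law (a10), together with rules (r1)--(r3), are immediate from the filter and primeness properties: $v(A\wedge B)\in\mathcal D$ iff both conjuncts are designated, $v(A\vee B)\in\mathcal D$ iff some disjunct is, and (a10) holds with equality because $4\mathcal Q$ is the four-element Boolean lattice and hence distributive. The double-negation schemata (a5), (a6) are trivial, since $f_\sim^4=\mathrm{id}$ gives $v(A)=v({\sim}{\sim}{\sim}{\sim}A)$. The explosion schema (a9) holds vacuously, because $v(A\wedge{\sim}{\sim}A)=v(A)\wedge f_\sim^2 v(A)=\mathbf F\notin\mathcal D$, so its antecedent is never designated. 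Finally, the contraposition rule (r4) follows from $f_\sim^2 x\in\mathcal D\iff x\notin\mathcal D$: if $v({\sim}{\sim}B)\in\mathcal D$ then $v(B)\notin\mathcal D$, whence $v(A)\notin\mathcal D$ by the premise $A\vDash_{\cnl}B$, and so $v({\sim}{\sim}A)\in\mathcal D$.

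What remains, and what I expect to be the only genuinely laborious part, are the four single-negation De Morgan schemata (a7), (a8), (b1), (b2). Here the clean filter reasoning breaks down, because the cyclic negation $f_\sim$ is neither order-preserving nor order-reversing on $4\mathcal Q$, so no monotonicity argument is available and one must argue directly. My approach is to use the characterization ``$f_\sim x\in\mathcal D\iff x\in\{\mathbf T,\mathbf{FU}\}$'' to convert each schema into a lattice condition about membership in $\{\mathbf T,\mathbf{FU}\}$, then settle it by a short case split on the relevant meet or join. For example, (a7) becomes the assertion that $v(A),v(B)\in\{\mathbf T,\mathbf{FU}\}$ implies $v(A)\wedge v(B)\in\{\mathbf T,\mathbf{FU}\}$, which is checked on the four relevant pairs; (a8), (b1), (b2) reduce in the same way to how $\{\mathbf T,\mathbf{FU}\}$ interacts with joins and meets. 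No individual case is deep, but none can be absorbed into the filter argument, so each must be carried out by hand.

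Finally, I would note that every verification above is stated for an arbitrary valuation and arbitrary values of the schematic letters, so each establishes validity of the full schema rather than of a single instance; collecting the cases completes the proof. The conceptual reason the argument stays short despite the unusual connective is precisely the split between the ``Boolean'' double negation $f_\sim^2$, which interacts cleanly with the prime filter $\mathcal D$, and the genuinely cyclic single negation, which must be treated by explicit computation.
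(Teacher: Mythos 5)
Your proposal is correct and follows essentially the same route as the paper: a direct case-by-case verification of every schema and rule against the matrix $\mathcal M^{\cnl}$, with your treatment of (a9) (the meet $x\wedge f_{\sim}^2x=\mathbf F$ is never designated) and of (r4) (via $f_{\sim}^2x\in\mathcal D\Leftrightarrow x\notin\mathcal D$) matching exactly the two cases the paper works out explicitly before declaring the rest similar. Your organizing facts are all sound and merely systematize the omitted cases; in particular your characterization $f_{\sim}x\in\mathcal D\Leftrightarrow x\in\{\mathbf T,\mathbf{FU}\}$ is precisely the content of the paper's Lemma~\ref{eq}, stated there with the set $\mathcal A=\{\mathbf{TU},\mathbf F\}$ in place of its complement.
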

\begin{proof} 
	We need to check each item from the list of axiomatic schemata and inference rules. Let us provide a couple of cases as an illustration.
	
	Suppose that axiomatic schemata (a9) is invalid, i.\,e. there a \cnl-valuation  $\upsilon$ that $\upsilon(A\wedge{\sim}^{\emph{\tiny 2}}A)\in\{\textbf{T}, \textbf{TU}\}$ and $\upsilon(B)\not\in\{\textbf{T}, \textbf{TU}\}$ that is $\upsilon(B)\in\{\textbf{F}, \textbf{FU}\}$. We immediately derive a contradiction since $A\wedge{\sim}^{\emph{\tiny 2}}A$ cannot take its value from the set $\{\mathbf T, \mathbf{TU}\}$ at all.
	
	Suppose that the rule (r4) does not preserve validity. This means that there is such valuation that $A\vDash_{\cnl} B$, but ${\sim}^{\emph{\tiny 2}}B\not\vDash_{\cnl}{\sim}^{\emph{\tiny 2}}A$. From the latter it follows that there is a \cnl-valuation $\upsilon$ such that $\upsilon({\sim}^{\emph{\tiny 2}}B)\in\{\textbf{T}, \textbf{TU}\}$ and $\upsilon({\sim}^{\emph{\tiny 2}}A)\not\in\{\textbf{T}, \textbf{TU}\}$ which means that $\upsilon({\sim}^{\emph{\tiny 2}}A)\in\{\textbf{F}, \textbf{FU}\}$. It is easy to note that it leads us to $\upsilon(A)\in\{\textbf{T}, \textbf{TU}\}$ and $\upsilon(B)\in\{\textbf{F}, \textbf{FU}\}$, but this contradicts with $A\vDash_{\cnl} B$, because this must be that $\upsilon(B)\in\{\textbf{T}, \textbf{TU}\}$. Therefore, (r4) preserves validity.

	The other cases are similar.
\end{proof} 
\begin{theorem}[{\sc Soundness for \cnl}]
	For any formulas $A$ and $B$, the following holds
	\begin{center} 
		$A\vdash B\text{ is \cnl-provable }\Rightarrow A\vDash_{\cnl} B$.
	\end{center} 
\end{theorem}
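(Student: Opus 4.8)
The plan is to derive the global soundness theorem from the local soundness result (Lemma~\ref{locsoundcnl}) by a straightforward induction on the structure of a \cnl-proof. Since a \cnl-proof is defined as a finite list of sequents, each of which is either an axiom or is obtained from earlier items by a rule of inference, the natural induction parameter is the position $n$ of a sequent within the list (equivalently, the length of the proof).

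First I would fix a \cnl-proof $S_1, S_2, \dots, S_k$ whose last item $S_k$ coincides with the sequent $A\vdash B$, and prove by strong induction on $n$ that every sequent $S_n = (A_n \vdash B_n)$ in the list is \cnl-valid, i.e. $A_n\vDash_{\cnl} B_n$. For the base case, and more generally whenever $S_n$ is an axiom, Lemma~\ref{locsoundcnl} guarantees directly that $S_n$ represents a \cnl-valid sequent, since that lemma asserts exactly that all axiomatic schemata of \cnl\ are \cnl-valid.

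For the inductive step, suppose $S_n$ is obtained from earlier sequents $S_{i_1}, \dots, S_{i_m}$, with each $i_j < n$, by one of the rules (r1)--(r4). By the induction hypothesis each premise $S_{i_j}$ is \cnl-valid, and since Lemma~\ref{locsoundcnl} also states that every rule of inference preserves \cnl-validity, the conclusion $S_n$ is \cnl-valid as well. Applying this to $n = k$ yields that $A\vdash B$ is \cnl-valid, that is $A\vDash_{\cnl} B$, which is precisely the desired implication.

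There is no genuine obstacle here, as all the semantic content has already been discharged in the local soundness lemma; the theorem is essentially a packaging of that lemma. The one point worth attention is matching the definition of a \cnl-proof to the induction: because a rule such as (r1) or (r3) may invoke premises appearing anywhere earlier in the list rather than merely at the immediately preceding position, it is \emph{strong} induction on $n$ (not ordinary induction on the previous item) that makes the argument go through cleanly.
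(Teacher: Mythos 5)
Your proposal is correct and takes exactly the same route as the paper, which proves the theorem ``by induction on the length of the proof, using Lemma~\ref{locsoundcnl}''; you have merely spelled out the details the paper leaves implicit. Your remark that \emph{strong} induction is needed (since rules like (r1) may cite premises anywhere earlier in the list) is a fair point of care, but it does not constitute a departure from the paper's argument.
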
	
\begin{proof}
	By induction on the length of the proof, using Lemma \ref{locsoundcnl}.
\end{proof}
\subsection{Completeness}
The idea of the completeness proof is based on a technique elaborated by J.\,M.~Dunn for the system of $\mathbf{FDE}$ (see \cite{dunn00}). This method essentially relies on the notion of a prime theory which is presented by the following definition.
\begin{definition}
	A \cnl-theory is the set of formulas $\alpha$ such that for all formulas $A$ and $B$,
	\begin{enumerate}\itemsep=0pt
		\item $A\wedge B\in\alpha$ whenever $A\in\alpha$ and $B\in\alpha$,
		\item $B\in\alpha$ whenever $A\in\alpha$ and $A\vdash B$ is \cnl-provable.
	\end{enumerate}
	A \cnl-theory is prime if $A\vee B\in\alpha$ implies $A\in\alpha$ or $B\in\alpha$.
	We call a \cnl-theory $\alpha$ \textit{c-normal} when for each formula $A$ it holds that $A\in\alpha$ if and only if ${\sim}^{\emph{\tiny 2}}A\notin\alpha$.
\end{definition}

As a first step toward completeness theorems for \cnl\ we prove the Extension Lemma. Note that we use this lemma uniformly for both completeness theorems. So we prove it for the case of \cnl, while proof for another system is the same.
\begin{lemma}[{\sc Extension Lemma}]\label{extension}
	For all formulas $A$ and $B$, if $A\vdash B$ is not \cnl-provable, then there is a c-normal prime theory $\alpha$ such that $A\in\alpha$, $B\not\in\alpha$.
\end{lemma}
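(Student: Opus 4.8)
The plan is to prove the Extension Lemma by a Lindenbaum-style construction adapted to the binary-consequence setting and to the specific c-normality requirement. First I would start from the hypothesis that $A \vdash B$ is not \cnl-provable. The goal is to build a prime theory $\alpha$ containing $A$ but excluding $B$. The natural starting point is to take $\alpha_0 = \{C : A \vdash C \text{ is \cnl-provable}\}$, the set of syntactic consequences of $A$. Using rule (r2) and the axioms (a1),(a2), one checks that $\alpha_0$ is closed under conjunction, and using (r1) it is closed under provable consequence, so $\alpha_0$ is a \cnl-theory; clearly $A \in \alpha_0$ (via reflexivity, which is derivable) and $B \notin \alpha_0$ by assumption. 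So I would define a notion of a pair $(\alpha, \Delta)$ where $\Delta$ is an ``excluded'' set, and enumerate all formulas, at each stage deciding whether to throw a formula into the theory or into the excluded set while maintaining the invariant that no element of the theory entails (a finite disjunction of elements of) the excluded set.

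Concretely I would fix an enumeration $C_1, C_2, \dots$ of all formulas and inductively extend $\alpha_0$: at step $n$, if adding $C_n$ to the current theory keeps $B$ out of its deductive closure, add it; otherwise leave it out. Taking the union over all stages yields a set $\alpha$ that is a maximal \cnl-theory not containing $B$ among its consequences of $A$. The key properties to verify are: (i) $\alpha$ is a theory (closure under $\wedge$ and under provable consequence), which is preserved at limits because the conditions are finitary; (ii) $\alpha$ is prime, which follows from maximality together with rule (r3): if $C \vee D \in \alpha$ but neither $C$ nor $D$ is in $\alpha$, then adding each separately would have forced $B$ in, and combining the two derivations via (r3) would force $B \in \alpha$, a contradiction; and (iii) $B \notin \alpha$, which is maintained throughout.

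The genuinely new part, and where I expect the main obstacle, is establishing c-normality: for every formula $C$, exactly one of $C$ and ${\sim}^{\emph{\tiny 2}} C$ belongs to $\alpha$. The two directions must be handled using the interaction axioms. That ${\sim}^{\emph{\tiny 2}} C$ and $C$ cannot both lie in $\alpha$ follows from (a9), $C \wedge {\sim}^{\emph{\tiny 2}} C \vdash B$: if both were in $\alpha$, their conjunction would be in $\alpha$ by (r2)/closure, and then $B \in \alpha$ by provable consequence, contradicting $B \notin \alpha$. For the converse direction — that at least one of them is in $\alpha$ — I would invoke the ``excluded middle'' sequent (T) from Proposition~\ref{derivable2}, namely $B' \vdash A' \vee {\sim}^{\emph{\tiny 2}} A'$, which guarantees $C \vee {\sim}^{\emph{\tiny 2}} C$ is a theorem-like element forced into every theory; then primeness of $\alpha$ splits this disjunction, placing $C$ or ${\sim}^{\emph{\tiny 2}} C$ into $\alpha$.

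The delicate point will be making the disjunction $C \vee {\sim}^{\emph{\tiny 2}} C$ genuinely a member of $\alpha$ rather than merely a consequence of some formula not in $\alpha$; here I would use that $(T)$ has an arbitrary antecedent $B$, so taking any element already in $\alpha$ (for instance $A$ itself) as the antecedent, closure under provable consequence puts $C \vee {\sim}^{\emph{\tiny 2}} C$ into $\alpha$. Combining the two halves gives the exclusive disjunction characteristic of c-normality. I would close by noting that the same construction, reading ``\cnll'' throughout and using that the axioms invoked — (a9), (r2), (r3), and (T) — are common to or provable in both systems, yields the result for \cnll\ verbatim, as the statement of the lemma promises.
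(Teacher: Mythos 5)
Your proposal is correct and takes essentially the same route as the paper's own proof: the same base theory $\alpha_0=\{C \mid A\vdash_{\cnl} C\}$, the same enumeration-based maximal extension keeping $B$ out, primeness from maximality via (r3), and c-normality exactly as in the paper — exclusivity from (a9) and totality from the derivable sequent (T) (instantiated with an antecedent already in $\alpha$) together with primeness, with the observation that everything used is common to both systems so the argument transfers to \cnll. The only detail left tacit in your primeness step is the explicit appeal to the distribution axiom (a10), which the paper uses to pass from $E\wedge(C\vee D)$ to $(E\wedge C)\vee(E\wedge D)$ before applying (r1); since (a10) is an axiom of both systems, this is a routine fill-in rather than a gap.
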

\begin{proof}
	Suppose that for some formulas $A$ and $B$, $A\vdash B$ is not \cnl-provable. Let us define $\alpha_{0}=\{C \mid A\vdash_{\cnl} C\}$. $\alpha_0$ is a theory as it is closed under $\vdash_{\cnl}$ and $\wedge$ (using the rule (r2)). Next we construct the sequence of theories taking some enumeration of the set $FM$ ($A_1, A_2, \ldots$)  and define
	$$
	\alpha_{n+1}=
	\begin{cases}
	\alpha_n, \text{ if } \alpha_{n}\cup\{A_{n+1}\} \vdash_{\cnl} B,\\
	\alpha_{n}\cup\{A_{n+1}\}, \text{ if } \alpha_{n}\cup\{A_{n+1}\}\not\vdash_{\cnl} B.
	\end{cases}
	$$
	
	Let $\alpha$ be the union of all $\alpha_n$'s. First we show that $\alpha$ is a prime theory such that $A\in\alpha$ and  $B\not\in\alpha$. $A\in\alpha$ by the construction of $ \alpha $. Assume $B\in\alpha$, hence $B$ was added to $\alpha_i$ on $i$-th stage  of construction of the sequence, which is impossible. For the primeness suppose that $\alpha$ is not prime, i.\,e. $C\vee D\in\alpha$, but $C\not\in\alpha$ and $D\not\in\alpha$. This means that both extensions $\alpha\cup\{C\}$ and $\alpha\cup\{D\}$ contain $B$. Then there is a conjunctions of formulas form $\alpha$, say $E$, such that $E\wedge C\vdash_{\cnl} B$ and $E\wedge D\vdash_{\cnl} B$. From this, using $(r3)$, we derive $(E\wedge C)\vee(E\wedge D)\vdash_{\cnl} B$. Then, using (a10) and $(r1)$, we have $E\wedge(C\vee D)\vdash_{\cnl} B$, so $B\in\alpha$. 
	
	Finally, $\alpha$ is also c-normal. Indeed, if for some $k$, $A_k\in\alpha$ and ${\sim}^{\emph{\tiny 2}}A_k\in\alpha$, then there is an $\alpha_i$ which contains $A_k\wedge{\sim}^{\emph{\tiny 2}}A_k$ as well as $B$, due to axiom schema $A\wedge{\sim}^{\emph{\tiny 2}}A\vdash B$ and the rules (r1) and (r2), contrary to the assumption. On the other hand, primeness of $\alpha$ and derivable schema $B\vdash_{\cnl} A\vee{\sim}^{\emph{\tiny 2}}A$ guarantee that for each $A_k$, one of two formulas, $A_k$ and ${\sim}^{\emph{\tiny 2}}A_k$, belongs to $\alpha$. 
\end{proof}    

\subsection{Completeness for \cnl}

Let $\mathcal{A}$ denote the set $\{\textbf{TU}, \textbf{F}\}$. We can express our truth-values in terms of $\mathcal A$ and $\mathcal D$ sets via the following expressions:
\begin{align*}
&v(A)=\textbf{T}\text{ iff }v(A)\in\mathcal{D}\text{ and }v(A)\notin\mathcal{A},\\
&v(A)=\textbf{TU}\text{ iff }v(A)\in\mathcal{D}\text{ and }v(A)\in\mathcal{A},\\
&v(A)=\textbf{F}\text{ iff }v(A)\notin\mathcal{D}\text{ and }v(A)\in\mathcal{A},\\
&v(A)=\textbf{FU}\text{ iff }v(A)\notin\mathcal{D}\text{ and }v(A)\notin\mathcal{A}.
\end{align*}
It is not difficult to see the next lemma, having in mind the interpretations of propositional connectives.
\begin{lemma}\label{eq}
	Let $A, B\in FM$, and $v$ be a \cnl-valuation. Then, the following expressions hold:
	\begin{enumerate}[(1)]\itemsep=0pt
		\item $v({\sim}A)\in\mathcal{D}$ iff $v(A)\notin\mathcal{A}$;
		\item $v({\sim}A)\in\mathcal{A}$ iff $v(A)\in\mathcal{D}$;
		\item $v(A\wedge B)\in\mathcal{D}$ iff $v(A)\in\mathcal{D}$ and $v(B)\in\mathcal{D}$;
		\item $v(A\wedge B)\in\mathcal{A}$ iff $v(A)\in\mathcal{A}$ or $v(B)\in\mathcal{A}$;
		\item $v(A\vee B)\in\mathcal{D}$ iff $v(A)\in\mathcal{D}$ or $v(B)\in\mathcal{D}$;
		\item $v(A\vee B)\in\mathcal{A}$ iff $v(A)\in\mathcal{A}$ and $v(B)\in\mathcal{A}$;
	\end{enumerate}
\end{lemma}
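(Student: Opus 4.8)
The plan is to verify each of the six biconditionals directly by appealing to the truth-value characterizations displayed just before the lemma, which express membership in $\mathcal{D}$ and $\mathcal{A}$ in terms of the four concrete values, together with the definitions of $f_{\sim}$, $f_{\wedge}$, and $f_{\vee}$ in the matrix $\mathcal{M}^{\cnl}$. Recall that $\mathcal{D}=\{\mathbf{T},\mathbf{TU}\}$ and $\mathcal{A}=\{\mathbf{TU},\mathbf{F}\}$, so the four values are distinguished by the pair of boolean conditions ``in $\mathcal{D}$'' and ``in $\mathcal{A}$'': $\mathbf{T}$ is $(\in\mathcal{D},\notin\mathcal{A})$, $\mathbf{TU}$ is $(\in\mathcal{D},\in\mathcal{A})$, $\mathbf{F}$ is $(\notin\mathcal{D},\in\mathcal{A})$, and $\mathbf{FU}$ is $(\notin\mathcal{D},\notin\mathcal{A})$. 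Thus each truth value is uniquely coded by these two bits, and the strategy is simply to compute how the operations act on these bits.

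First I would handle the negation clauses (1) and (2). Reading off $f_{\sim}$ from its defining table, I trace each value: $\mathbf{T}\mapsto\mathbf{TU}$, $\mathbf{TU}\mapsto\mathbf{F}$, $\mathbf{F}\mapsto\mathbf{FU}$, $\mathbf{FU}\mapsto\mathbf{T}$. To prove (1), $v({\sim}A)\in\mathcal{D}$ iff $v(A)\notin\mathcal{A}$, I check that the inputs whose image lies in $\mathcal{D}=\{\mathbf{T},\mathbf{TU}\}$ are exactly $\mathbf{FU}$ (image $\mathbf{T}$) and $\mathbf{T}$ (image $\mathbf{TU}$), and these are precisely the two values outside $\mathcal{A}$. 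For (2), $v({\sim}A)\in\mathcal{A}$ iff $v(A)\in\mathcal{D}$, the inputs with image in $\mathcal{A}=\{\mathbf{TU},\mathbf{F}\}$ are $\mathbf{T}$ (image $\mathbf{TU}$) and $\mathbf{TU}$ (image $\mathbf{F}$), which are exactly the members of $\mathcal{D}$. This is the heart of the lemma: it shows that one application of the cyclic negation swaps the roles of the ``designated'' bit and the ``$\mathcal{A}$'' bit, which is the key to the later completeness argument.

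For the conjunction and disjunction clauses (3)--(6) I would read $f_{\wedge}$ and $f_{\vee}$ off the lattice $4\mathcal{Q}$ (the diamond with top $\mathbf{T}$, bottom $\mathbf{F}$, and incomparable middle elements $\mathbf{TU}$ and $\mathbf{FU}$). The cleanest way is to observe that in $4\mathcal{Q}$ the designated set $\mathcal{D}=\{\mathbf{T},\mathbf{TU}\}$ and the set $\mathcal{A}=\{\mathbf{TU},\mathbf{F}\}$ are each filters/ideals with respect to meet and join in the appropriate sense. Concretely, I would confirm that $v(A\wedge B)\in\mathcal{D}$ iff both conjuncts are in $\mathcal{D}$ and dually for the join in $\mathcal{D}$, since $\mathcal{D}$ is one of the prime filters of the lattice mentioned in the introduction; and that the same holds with the roles of $\wedge$ and $\vee$ reversed for $\mathcal{A}$, because $\mathcal{A}$ behaves as a prime filter under the dual order (equivalently, $\mathcal{A}$ is a prime ideal). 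Rather than argue abstractly, I would just tabulate the sixteen input pairs for each operation and read off the four biconditionals, which is routine.

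The main obstacle is not any deep argument but rather the bookkeeping: one must be careful that $\mathcal{D}$ and $\mathcal{A}$ overlap (they share $\mathbf{TU}$) and together with their complements partition $\mathscr{U}$ into the four singletons, so the two bits genuinely determine the value. The only place where a little care is required is verifying clauses (4) and (6) for $\mathcal{A}$: here the ``meet'' of the matrix corresponds to an $\mathcal{A}$-membership that is governed by the $\vee$-pattern and vice versa, reflecting that $\mathcal{A}$ sits crosswise in the diamond relative to $\mathcal{D}$. Once the negation swap in (1)--(2) is established and the lattice behaviour of the two filters in (3)--(6) is checked case by case, the lemma follows immediately.
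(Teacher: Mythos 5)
Your proposal is correct and is essentially the verification the paper intends: the paper states Lemma~\ref{eq} without proof (``It is not difficult to see\ldots having in mind the interpretations of propositional connectives''), and your bit-coding of the four values by membership in $\mathcal{D}$ and $\mathcal{A}$, the trace through the $f_{\sim}$ table for clauses (1)--(2), and the case check (equivalently, the observation that $\mathcal{D}$ is a prime filter and $\mathcal{A}$ a prime ideal of $4\mathcal{Q}$) for clauses (3)--(6) is exactly the routine inspection being alluded to.
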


Now we turn to the definition of a \cnl-canonical valuation.
\begin{definition}\label{canval1} For each $c$-normal prime theory $\alpha$ and propositional variable $p$ we define a \cnl-canonical valuation $v^c$ as a mapping $PV\mapsto 4\mathcal Q$  satisfying the following expressions:
	\begin{enumerate}[(1)]\itemsep=0pt
		\item $v^c(p)\in\mathcal{D}\Leftrightarrow p \in \alpha$;
		\item $v^c(p)\in\mathcal{A}\Leftrightarrow{\sim^{3}}p \in \alpha$;
	\end{enumerate}
\end{definition}
Define a unique extension of $v^c$ to the set of all formulas in the usual way and denote this extension by $v^c$ as well and prove that extended valuation behaves as expected with respect to the \textit{c}-normal prime theories.

\begin{lemma}[\sc Canonical Valuation Lemma for \cnl]\label{canval2}
	For each $c$-normal prime theory $\alpha$, formula $A$ and extended canonical \cnl-valua\-tion $v^c$ the following statements hold:  
	\begin{enumerate}[(1)]\itemsep=0pt
		\item $v^c(A)\in\mathcal{D}\Leftrightarrow A \in \alpha$,
		\item $v^c(A)\in\mathcal{A}\Leftrightarrow{\sim^{3}}A \in \alpha$.
	\end{enumerate}
\end{lemma}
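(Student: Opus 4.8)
The plan is to prove statements (1) and (2) simultaneously by induction on the complexity of $A$, since the two properties feed into one another through the negation clause. Throughout I would read the semantic bookkeeping off Lemma~\ref{eq}, which records how $\mathcal{D}$- and $\mathcal{A}$-membership of a compound formula under $v^c$ is determined by that of its immediate subformulas, and combine it with the closure properties of a $c$-normal prime theory $\alpha$ together with the axioms of \cnl. The base case $A=p$ is immediate from Definition~\ref{canval1}.

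For the negation step, write $A={\sim}C$. For (1), Lemma~\ref{eq}(1) gives $v^c({\sim}C)\in\mathcal{D}$ iff $v^c(C)\notin\mathcal{A}$, which by the induction hypothesis for (2) is ${\sim}^3 C\notin\alpha$; applying $c$-normality to the formula ${\sim}C$ (namely ${\sim}C\in\alpha$ iff ${\sim}^2{\sim}C={\sim}^3 C\notin\alpha$) converts this into ${\sim}C\in\alpha$, as required. For (2), Lemma~\ref{eq}(2) gives $v^c({\sim}C)\in\mathcal{A}$ iff $v^c(C)\in\mathcal{D}$, which by the hypothesis for (1) is $C\in\alpha$; since (a5) and (a6) yield $C\dashv\vdash{\sim}^4 C$ and $\alpha$ is closed under provable consequence, this is equivalent to ${\sim}^4 C={\sim}^3({\sim}C)\in\alpha$.

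The conjunction and disjunction steps split into an easy half and the crux. For statement (1) with $A=C\wedge D$ I would use Lemma~\ref{eq}(3), the hypothesis for (1), closure of $\alpha$ under $\wedge$ together with (a1),(a2); for $A=C\vee D$ I would use Lemma~\ref{eq}(5), primeness of $\alpha$, and (a3),(a4). These reproduce the standard Dunn argument. The work is in statement (2). Take $A=C\wedge D$: Lemma~\ref{eq}(4) and the hypothesis for (2) give $v^c(C\wedge D)\in\mathcal{A}$ iff ${\sim}^3 C\in\alpha$ or ${\sim}^3 D\in\alpha$, and by primeness with (a3),(a4) this is ${\sim}^3 C\vee{\sim}^3 D\in\alpha$. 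It therefore remains to establish the provable equivalence
\[ {\sim}^3(C\wedge D)\dashv\vdash{\sim}^3 C\vee{\sim}^3 D, \]
so that membership transfers through $\alpha$. I would obtain it in two moves: the full single-negation law ${\sim}(C\wedge D)\dashv\vdash{\sim}C\wedge{\sim}D$ (from (a7) and (b1)), propagated through rule (r4), gives ${\sim}^3(C\wedge D)\dashv\vdash{\sim}^2({\sim}C\wedge{\sim}D)$, and then (De2) of Proposition~\ref{derivable2} rewrites the right-hand side as ${\sim}^3 C\vee{\sim}^3 D$. The disjunction case is dual: Lemma~\ref{eq}(6), the hypothesis, closure under $\wedge$ and (a1),(a2) reduce it to ${\sim}^3(C\vee D)\dashv\vdash{\sim}^3 C\wedge{\sim}^3 D$, which follows from ${\sim}(C\vee D)\dashv\vdash{\sim}C\vee{\sim}D$ (from (a8),(b2)), rule (r4), and (De1).

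I expect the main obstacle to be exactly these two ${\sim}^3$-De Morgan equivalences. They are the point where the \cnl-specific axioms (b1),(b2) — which, unusually, make $\sim$ distribute over $\wedge$ and $\vee$ without swapping them — are indispensable, and where one must keep in mind that it is designatedness, i.e.\ $\mathcal{D}$-membership, and not the actual four-valued outputs that is preserved: indeed ${\sim}^3(C\wedge D)$ and ${\sim}^3 C\vee{\sim}^3 D$ need not denote the same element of $4\mathcal Q$, yet they are interderivable. Everything else is routine transfer between the semantic clauses of Lemma~\ref{eq} and the closure axioms defining $\alpha$.
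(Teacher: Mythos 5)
Your proof is correct, and its skeleton --- simultaneous induction on (1) and (2) driven by Lemma~\ref{eq}, the base case from Definition~\ref{canval1}, the negation case via $c$-normality applied to ${\sim}C$ and via (a5)/(a6), and the $\mathcal{D}$-clauses for $\wedge,\vee$ via (a1)--(a4), closure under $\wedge$ and primeness --- coincides with the paper's. Where you genuinely diverge is exactly the crux you identified: clause (2) for $\wedge$ and $\vee$. The paper never leaves the single-negation level: from ${\sim}^{3}B\in\alpha$ or ${\sim}^{3}C\in\alpha$ it toggles by $c$-normality to ${\sim}B\notin\alpha$ or ${\sim}C\notin\alpha$, applies the De Morgan pair (b1)/(a7) (resp.\ (a8)/(b2) together with primeness) \emph{inside the theory} to conclude ${\sim}(B\wedge C)\notin\alpha$, and toggles back by $c$-normality to ${\sim}^{3}(B\wedge C)\in\alpha$. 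You instead stay at the ${\sim}^{3}$ level and establish the interderivabilities ${\sim}^{3}(C\wedge D)\dashv\vdash{\sim}^{3}C\vee{\sim}^{3}D$ and ${\sim}^{3}(C\vee D)\dashv\vdash{\sim}^{3}C\wedge{\sim}^{3}D$ outright; your chaining checks out: (r4) applied to both directions of (b1)/(a7) does yield ${\sim}^{3}(C\wedge D)\dashv\vdash{\sim}^{2}({\sim}C\wedge{\sim}D)$, and the substitution instances of (De1)/(De2) with $A:={\sim}C$, $B:={\sim}D$ close the gap, after which membership transfers by a single appeal to closure under $\vdash_{\cnl}$ (plus (a3)/(a4) and primeness to pass between disjunction membership and membership of a disjunct). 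The trade-off: the paper's $c$-normality toggle needs nothing beyond the axiom list, so its completeness argument is self-contained, whereas your route imports Proposition~\ref{derivable2}, which the paper states without proof --- strictly speaking you owe derivations of (De1)/(De2), though they do go through from the common postulates alone (e.g.\ one half of (De2) by (r4) on (a1)/(a2) followed by (r3); the other by conjoining with instances of ({\sc T}), distributing via (a10), and discharging the contradictory disjunct with (a9)). In exchange, your version makes the ${\sim}^{3}$-level De Morgan equivalences explicit as reusable provable sequents and keeps the biconditional chain positive, avoiding the paper's double detour through non-membership. Your closing caution is also well taken: these are interderivabilities, i.e.\ preservation of $\mathcal{D}$-membership, not identities of elements of $4\mathcal{Q}$.
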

\begin{proof}
	By induction on the structure of a formula $A$. The base case when $A$ is a propositional variable follows from the definition~\ref{canval1}. Consider the cases for the complex formulas. The induction hypothesis (`IH' in the sequel) claims that lemma is true for their proper subformulas. We will tacitly use the two basic properties of theories, namely, their closure under conjunction and the relation $\vdash_{\cnl}$ throughout the proof.  	
	
	\medskip
	Case $A={\sim}B$.
	
	$v^c({\sim}B)\in\mathcal{D}\Leftrightarrow \text{(by Lemma \ref{eq}) } v^c(B)\notin\mathcal{A}\Leftrightarrow\text{(by IH) }{\sim^{\emph{\tiny3}}}B\notin\alpha  
	\Leftrightarrow$ (by \textit{c}-normality) ${\sim}B\in\alpha.$
	
	$v^c({\sim}B)\in\mathcal{A} \Leftrightarrow v^c(B)\in\mathcal{D}\text{ (by Lemma \ref{eq})}\Leftrightarrow B\in\alpha$ (by IH)
	$\Leftrightarrow {\sim^{\emph{\tiny4}}}B\in\alpha$   (by (a5), (a6)).
	
	\medskip
	Case $A=B\wedge C$. 
	
	$v^c(B\wedge C)\in\mathcal{D}\Leftrightarrow\text{(by Lemma \ref{eq}) }v^c(B)\in\mathcal{D}\text{ and }v^c(C)\in\mathcal{D} 
	\Leftrightarrow $ (by IH) $ B\in\alpha\text{ and }C\in\alpha 
	\Leftrightarrow$ \par $\text{(by (a1), (a2)) }  B\wedge C\in\alpha.$

	$v^c(B\wedge C)\in\mathcal{A}\Leftrightarrow\text{(by Lemma \ref{eq}) } v^c(B)\in\mathcal{A}\text{ or }v^c(C)\in\mathcal{A}  
	\Leftrightarrow$ (by IH) ${\sim^{\emph{\tiny 3}}}B\in\alpha\text{ or }{\sim^{\emph{\tiny 3}}}C\in\alpha  
	\Leftrightarrow$ \par(by \textit{c}-normality) ${\sim}B\notin\alpha\text{ or }{\sim}C\notin\alpha$  $\Leftrightarrow$ (by (b1), (a7)) ${\sim}(B\wedge C)\notin\alpha  
	\Leftrightarrow$ \par $\text{(by \textit{c}-normality) }{\sim^{\emph{\tiny3}}}(B\wedge C)\in\alpha$.
	
	\medskip
	Case $A=B\vee C$. 
	
	$
	v^c(B\vee C)\in\mathcal{D} \Leftrightarrow\text{(by Lemma \ref{eq}) } v^c(B)\in\mathcal{D}\text{ or }v^c(C)\in\mathcal{D}
	\Leftrightarrow$ (by IH) $ B\in\alpha\text{ or }C\in\alpha
	\Leftrightarrow$ \par $\text{(by (a3), (a4), primeness) } B\vee C\in\alpha. 
	$
	
	$
	v^c(B\vee C)\in\mathcal{A}\Leftrightarrow\text{(by Lemma \ref{eq}) } v^c(B)\in\mathcal{A}\text{ and }v^c(C)\in\mathcal{A}
	\Leftrightarrow$ (by IH) ${\sim^{\emph{\tiny 3}}}B\in\alpha\text{ and }{\sim^{\emph{\tiny3}}}C\in\alpha 
	\Leftrightarrow$ 
	\par(by \textit{c}-normality) ${\sim}B\notin\alpha\text{ and }{\sim}C\notin\alpha 
	\Leftrightarrow$ (by (a3), (a4), (a8), (b2), primeness) ${\sim}(B\vee C)\notin\alpha
	\Leftrightarrow$ \par $\text{(by \textit{c}-norm.) }{\sim^{\emph{\tiny 3}}}(B\vee C)\in\alpha.
	$
\end{proof} 
\begin{theorem}[\sc Completeness for \cnl]\label{compl1}
	For any formulas $A$ and $B$, the following holds:
	\[A\vDash_{\cnl} B\Rightarrow A\vdash B\text{ is \cnl-provable}.\]
\end{theorem}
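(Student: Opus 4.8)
The plan is to argue by contraposition, assembling the two preceding lemmas. The substantive work---building an extending theory and showing that a canonical valuation faithfully tracks membership in it---has already been carried out in Lemma~\ref{extension} and Lemma~\ref{canval2}, so what remains is a short combination of them.

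First I would assume that $A\vdash B$ is not \cnl-provable and apply the Extension Lemma (Lemma~\ref{extension}) to obtain a $c$-normal prime theory $\alpha$ with $A\in\alpha$ and $B\notin\alpha$. This $\alpha$ determines, through Definition~\ref{canval1}, a canonical valuation $v^c$; note that $v^c$ is a legitimate \cnl-valuation, since it maps $PV$ into $\mathscr U$ (the carrier of $4\mathcal Q$) and is extended to $FM$ by the matrix operations of $\mathcal M^{\cnl}$. The fact that a single value of $\mathscr U$ is coherently assigned to each variable rests on $c$-normality together with the observation that the four values are distinguished precisely by their membership in $\mathcal D$ and $\mathcal A$.

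Next I would apply clause~(1) of the Canonical Valuation Lemma (Lemma~\ref{canval2}) to $A$ and to $B$ in turn. From $A\in\alpha$ it gives $v^c(A)\in\mathcal D$, and from $B\notin\alpha$ it gives $v^c(B)\notin\mathcal D$. Hence $v^c$ is a \cnl-valuation that designates $A$ but not $B$, which is exactly $A\nvDash_{\cnl}B$. Taking the contrapositive yields the desired implication that $A\vDash_{\cnl}B$ entails \cnl-provability of $A\vdash B$. There is no real obstacle remaining at this point; the only thing worth flagging is that clause~(1) of Lemma~\ref{canval2} alone does the job here, clause~(2) being needed only internally to drive its inductive proof.
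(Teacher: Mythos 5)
Your proposal is correct and follows exactly the paper's own argument: contraposition via the Extension Lemma (Lemma~\ref{extension}) to obtain a $c$-normal prime theory separating $A$ from $B$, then clause~(1) of the Canonical Valuation Lemma (Lemma~\ref{canval2}) to turn that theory into a countervaluation, refuting $A\vDash_{\cnl}B$. Your added remarks (well-definedness of $v^c$ as a \cnl-valuation, and that clause~(2) is needed only inside the induction) are accurate and even fix a small typo in the paper, whose final line mistakenly writes $A\not\vDash_{\cnll}B$ instead of $A\not\vDash_{\cnl}B$.
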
	
\begin{proof}
	Suppose $A\vdash B$ is not \cnl-provable. Then, by Lemma~\ref{extension}, there is prime theory $\alpha$ such that $A\in\alpha$ and $B\not\in\alpha$. Then, by Lemma \ref{canval2}, we have that $v^c(A)\in\mathcal{D}$ but $v^c(B)\notin\mathcal{D}$, so $A\not\vDash_{\cnll} B$.
\end{proof} 

\section{Conclusion}

Although we have studied probably the most natural logics of paired cyclic negations, the whole picture is still waiting to be explored.
Even the framework of the four-valued semantics gives some possible directions for the further investigations. Specifically, one can choose other sets of the designated truth values or combine the different collections of designated and anti-designated truth values. On the other hand, alternative definitions of the consequence relation are also possible. To obtain the more abstract results, paired cyclic negations could be put into more general lattice structures, even not necessary finitely based. Having in mind ability to simulate the other negation-like operations, the potential relationships between logical systems appear to be of the main  interest.
\bibliographystyle{eptcs}
\bibliography{biblio}
\end{document}